\documentclass[11pt,letterpaper]{article}
\usepackage[margin=1in]{geometry}
\usepackage[T1]{fontenc}
\usepackage{lmodern,microtype}
\usepackage{amsmath,amssymb,amsthm,mathtools,booktabs}
\usepackage[dvipsnames]{xcolor}
\usepackage[colorlinks=true,linkcolor=MidnightBlue,citecolor=MidnightBlue,urlcolor=MidnightBlue]{hyperref}
\usepackage[capitalize,nameinlink,noabbrev]{cleveref}
\let\originalthebibliography\thebibliography
\renewcommand{\thebibliography}[1]{%
  \originalthebibliography{#1}%
  \setlength{\itemsep}{2pt}%
  \setlength{\parsep}{0pt}%
}
\newtheorem{theorem}{Theorem}[section]
\newtheorem{lemma}[theorem]{Lemma}

\theoremstyle{definition}
\newtheorem{definition}[theorem]{Definition}
\newtheorem{remark}[theorem]{Remark}
\newcommand{\F}{\mathbb F}

\newcommand{\TV}{\operatorname{TV}}
\newcommand{\Shr}{\operatorname{Shr}}
\newcommand{\Rec}{\operatorname{Rec}}
\newcommand{\Ext}{\operatorname{Ext}}
\newcommand{\err}{\operatorname{err}}

\newcommand{\K}{\mathcal K}
\newcommand{\M}{\mathcal M}
\newcommand{\Sour}{\mathcal S}

\newcommand{\ind}{\mathbf 1}
\newcommand{\Cb}{\bar C}
\hypersetup{pdftitle={Secret Sharing from Non-Extractable Randomness},pdfauthor={}}
\title{Secret Sharing Despite Non-Extractable Randomness}
\author{Divesh Aggarwal\\
		National University of Singapore\\
		\texttt{dcsdiva@nus.edu.sg}}
\date{\today}
\begin{document}
\maketitle
\begin{abstract}
\noindent
Bosley and Dodis (TCC 2007) proved that a class of imperfect randomness
sources that suffices for information-theoretic encryption of messages
longer than the logarithm of the key length must admit deterministic
extraction of nearly uniform bits, and asked whether the same is true of
two-out-of-two secret sharing. Aggarwal, Chung, Obremski, and Ribeiro
(TCC 2022) gave a positive answer for schemes that are in addition
leakage-resilient or non-malleable; the original question remained open.

We answer it in the negative. For every $b\ge2$ we construct a class of
flat sources on $\{0,1\}^{b+2}$, each of min-entropy at least $b$, together
with a perfectly correct and perfectly private two-out-of-two scheme that
shares a $b$-bit secret into two $(b+1)$-bit shares using any source in the
class, and we show that every deterministic map from $\{0,1\}^{b+2}$ to
$\{0,1\}$ is at statistical distance at least $1/6$ from uniform on some
source in the class. This implies that sharing a long enough secret does not require extractable randomness.
The proof is a short parity argument. The idea of the construction extends to other thresholds: for
two-out-of-$n$ and $r$-out-of-$r$ sharing we give perfect schemes with
$(b+1)$-bit shares, private for classes of flat sources from which no single
bit can be extracted with error below $1/(4n-2)$ and $1/(2(2^r-1))$,
respectively.
\end{abstract}

\section{Introduction}
Uniform randomness is the most fundamental resource in information-theoretic
cryptography. A uniform one-time pad gives perfect encryption, and splitting a secret
into a uniform one-time pad and the padded secret gives perfect two-out-of-two secret
sharing~\cite{Sha49,Sha79,Bla79}. In contrast, physical sources of randomness are typically not
uniform, and their exact distribution is usually unknown; the most one can
assume is that the distribution belongs to some well-defined \emph{class} of
sources, for instance the class of all sources with a given min-entropy. The
cleanest way to use such a source is to extract uniform randomness from it: if a single
deterministic function maps every source in the class to a nearly uniform
string, every cryptographic task requiring uniform randomness becomes possible. A basic question, going
back to~\cite{MP90,DS02,DOPS04}, is whether extraction is necessary:
\begin{quote}
\emph{If a class of sources suffices for a given cryptographic task, must it
be extractable?}
\end{quote}
The answer depends on the task. Entropy alone does not suffice:
deterministic extraction fails for sources with almost full
min-entropy~\cite{SV84,CG88}, and such sources do not support
information-theoretic privacy either~\cite{MP90,DOPS04}. For carefully
chosen classes, on the other hand, privacy is possible without extraction
when the message is a single bit: Dodis and Spencer separated one-bit
encryption from extraction~\cite{DS02}, and Dodis, Pietrzak, and Przydatek
separated one-bit secret sharing from one-bit encryption~\cite{DPP06}.

\paragraph{Encryption implies extraction.}
Bosley and Dodis~\cite{BD07} showed that these separations only work if the message length is small. If a class of distributions on $d$-bit keys permits
perfect encryption of $b$-bit messages, then for every $\varepsilon>0$ with
\[
 b>\log d+2\log(1/\varepsilon)
\]
the class admits a deterministic extractor with
$\lfloor b-2\log(1/\varepsilon)\rfloor$ output bits and error
$\varepsilon$~\cite[Theorem~1]{BD07}. They complemented this result with
non-extractable classes that support encryption of
$b\le\log d-\log\log d-2$ bits. Hence, as soon as the message is sufficiently longer than
the logarithm of the key length, \emph{encryption requires extractable
randomness}.

Encryption yields two-out-of-two secret sharing (give the key to one party
and the ciphertext to the other), but not conversely~\cite{DPP06}, so secret
sharing is the weaker and more basic primitive. Bosley and Dodis left open
whether it obeys the same principle~\cite[Section~5]{BD07}:
\begin{quote}
\emph{Does two-out-of-two secret sharing of long secrets require extractable
randomness?}
\end{quote}
Aggarwal, Chung, Obremski, and Ribeiro~\cite{ACOR22} revisited the question
fifteen years later. They proved that the answer is \emph{yes} when the
scheme additionally resists one bit of local leakage from each share, or is
non-malleable (unless $d$ is exponentially large); see also~\cite{Chu24}.
They posed the question for general thresholds and explicitly left ordinary
secret sharing open.

\subsection{Our results}
We resolve the question in the negative, in a strong form: the scheme is
perfect, the secret is only two bits shorter than the source string, every
source is flat with high min-entropy, and not even one bit can be extracted
with error below a fixed constant.
 
A \emph{source class} is a fixed family of distributions on the dealer's
random input, which we call the \emph{key}. A single sharing scheme must be
private whichever distribution in the class the key is drawn from, and a
single deterministic extractor must work for every distribution in the
class. Extraction error is measured by statistical distance from uniform.
Formal definitions are in \cref{sec:model}.
 
\begin{theorem}[Main result]\label{thm:main}
For every integer $b\ge2$ there are a source class $\Sour_b$ on
$\{0,1\}^{b+2}$ and a perfectly correct two-out-of-two secret sharing scheme
for $b$-bit messages that is perfectly private for $\Sour_b$, such that:
\begin{enumerate}
\item each share has $b+1$ bits;
\item every distribution in $\Sour_b$ is uniform on a set of size between
$2^b$ and $2^{b+1}$, and hence has min-entropy at least $b$;
\item for every $\ell\ge1$ and every function
$\Ext:\{0,1\}^{b+2}\to\{0,1\}^\ell$,
\[
 \max_{P\in\Sour_b}\TV(\Ext(K),U_\ell)\ge\frac16,\qquad K\sim P.
\]
\end{enumerate}
Sharing and reconstruction use a constant number of operations in
$\F_{2^b}$.
\end{theorem}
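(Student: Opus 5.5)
The plan is to build the scheme first and then choose the class to be as large as possible subject to perfect privacy, so that many sources are available for defeating extractors. Write the key as $k=(x,y)$ with $x\in\{0,1\}^2$ and $y\in\F_{2^b}$, and the secret as $m\in\F_{2^b}$. The dealer outputs $S_1=(\beta(x),\,y)$ and $S_2=(\beta'(x),\,\lambda_x y+m)$. Here $\beta,\beta'$ are fixed bits of $x$, and $\lambda_x\in\F_{2^b}^*$ is a constant chosen from $x$ so that the map $x\mapsto(\beta(x),\beta'(x))$ is injective. Then reconstruction reads off $x$ from the two tag bits and recovers $m=(\lambda_x y+m)-\lambda_x y$. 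This gives perfect correctness, shares of $b+1$ bits, and a constant number of field operations.

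\textbf{Privacy and the class.} Privacy of $S_1$ holds for every source, because $S_1$ does not involve $m$. Privacy of $S_2$ holds exactly when, for each value of the tag $\beta'$, the multiset $\{\lambda_x y : (x,y)\in\operatorname{supp}P,\ \beta'(x)=\text{tag}\}$ is uniform on $\F_{2^b}$ whenever it is nonempty (or at least has the same distribution for all $m$, which for additive shifts forces uniformity). Define $\Sour_b$ to be all flat sources whose support $T$ has this property. Concretely, each tag class of $T$ is a union of ``blocks'' $\{x\}\times\F_{2^b}$, or of pieces $\{x\}\times A$ and $\{x'\}\times A'$ with the same tag, where $\lambda_xA\sqcup\lambda_{x'}A'=\F_{2^b}$. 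I will check that supports may consist of two or three such full copies of $\F_{2^b}$ spread over the four values of $x$. The sizes $2^b$, $3\cdot 2^{b-1}$ (using half-blocks) and $2^{b+1}$ all lie in the range stated in item~2.

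\textbf{Extraction lower bound (main obstacle).} Fix $f:\{0,1\}^{b+2}\to\{0,1\}$; it suffices to treat $\ell=1$, since the first bit of $\Ext$ is a post-processing of $\Ext$ and statistical distance cannot grow under post-processing. The goal is a parity argument: exhibit a small family of sources in $\Sour_b$ with the following property. If every member had $\Pr[f=1]$ within less than $1/6$ of $1/2$, then summing the $f$-weights over the family with signs would yield an odd integer combination equal to an even one, or a set of size $3$ units split exactly evenly. The number $1/6$ is the bias of the best split of three equal units: $|1/3-1/2|=1/6$. So I would aim for sources whose supports are unions of three equal-size pieces $Q_1,Q_2,Q_3$ drawn from a pool of pieces that are pairwise interchangeable in the class. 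Let $q_i$ denote the fraction of $Q_i$ on which $f=1$. The source on $Q_1\cup Q_2\cup Q_3$ has $\Pr[f=1]=(q_1+q_2+q_3)/3$. Using the swaps allowed by the privacy constraint, I would show that the class contains sources on $Q_i\cup Q_j$ and on the complementary pairings. Forcing all these sources to be nearly balanced then pushes the $q_i$ toward values that are incompatible with one another; for instance, a quantity is forced to equal both $1/2$ and a sum of an odd number of thirds. The main difficulty is choosing the pieces, namely the half-blocks $A$ and their translates under the $\lambda_x$, so that enough combinations remain private while the resulting linear system in the $q_i$ has no solution with all deviations below $1/6$. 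I would first work out $b=2$ by hand to find the right pieces, then lift the construction to general $b$ via an $\F_2$-linear decomposition $\F_{2^b}=V\oplus V'$ with $\lambda V=V'$.
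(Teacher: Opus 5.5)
There is a fatal gap: your scheme cannot satisfy item~3, whichever class you choose. In your construction $S_1=(\beta(x),y)$ does not involve $m$ at all, and $S_2=(\beta'(x),\lambda_xy+m)$ is injective in $m$ for every key. So the scheme is one-time-pad encryption with ciphertext $S_2$ and decryption key $k$. That is precisely the setting in which Bosley and Dodis show that extraction is forced.

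Here you do not even need their theorem. As you observe yourself, privacy of $S_2$ under $P$ forces each measure $z\mapsto\Pr_P[\beta'(x)=t,\ \lambda_xy=z]$ to be translation-invariant, hence constant on $\F_{2^b}$. Summing over $t$, the value $\lambda_xy$ is exactly uniform under every source for which $S_2$ is private. Hence $\Ext(x,y)=\lambda_xy$, the value of $S_2(0,k)$, outputs $b$ perfectly uniform bits on every $P\in\Sour_b$. No choice of half-blocks, decomposition $\F_{2^b}=V\oplus V'$, or parity bookkeeping over pieces $Q_1,Q_2,Q_3$ can then give a lower bound of $1/6$.

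The missing idea is that \emph{neither} share may be injective in the message on every key. Each party must receive, on some keys, a message-independent dummy, so that no single share is a ciphertext. The paper takes, for each $i\in\F_{2^b}$:
\begin{itemize}
\item a central key $C_i$ with shares $(i+m,\alpha i+m)$;
\item side keys $L_i$ and $R_i$ with shares $(i+m,\bar i)$ and $(\bar i,\alpha i+m)$.
\end{itemize}
A source is uniform on a set containing, at each index, either $C_i$ or both $L_i$ and $R_i$. Each party then gets its padded value exactly once per index whichever option is taken, and its tagged dummies do not involve $m$, which gives perfect privacy. But privacy now forces only the untagged part of each share to be uniform. The probability of a dummy, and which dummies occur, depend on the source, so there is no source-independent uniform quantity like your $\lambda_xy$ to extract.

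The parity argument is then per index. With $h=(-1)^{\Ext}$, the integer $v_i=h(C_i)-h(L_i)-h(R_i)$ is odd. For $T=\{i:v_i>0\}$ one gets $H_T-H_{T^c}=\sum_i|v_i|\ge2^b$, while $Z_T+Z_{T^c}=3\cdot2^b$, so $P_T$ or $P_{T^c}$ has error at least $1/6$. Your ``one unit versus two units'' intuition for the constant $1/6$ is right. It has to be applied to keys on which the two shares trade the roles of dummy and padded copy, not to pieces of a one-time pad.
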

 
The extractor in item~3 is arbitrary: it may be computationally unbounded
and may depend on the whole key. The key has $d=b+2$ bits, so the secret is
far longer than $\log d$, which is the regime in which encryption forces
extraction~\cite{BD07}. The constant $1/6$ cannot be improved for this
class: a suitable function of the key has error at most $1/6$ on every
source in $\Sour_b$ (\cref{thm:twon} with $n=2$).
 
\paragraph{Other thresholds.}
Aggarwal, Chung, Obremski, and Ribeiro~\cite{ACOR22} also asked the
question for $t$-out-of-$n$ sharing. Their random-less reductions turn any $t$-out-of-$n$ scheme with
$t\ge2$ into a two-out-of-two scheme for the same source class, so a
separation for any threshold gives one for two-out-of-two, and
two-out-of-two is the easiest case in which to separate sharing from
extraction. The reductions do not go in the other direction (see the end of
\cref{sec:rr}), so \cref{thm:main} by itself says nothing about other
thresholds. We extend it to two-out-of-$n$ sharing for every $n$, and to
$r$-out-of-$r$ sharing for every $r$.
 
\begin{theorem}[see \cref{thm:twon,thm:rr}]\label{thm:higher-intro}
\leavevmode
\begin{enumerate}
\item For every $n\ge2$ and every $b$ with $2^b>n$, there are a source class
on a key set of size $(2n-1)2^b$ and a perfectly correct
two-out-of-$n$ scheme for $b$-bit messages, with $(b+1)$-bit shares, that is
perfectly private for this class, such that every function $\Ext$ to
$\{0,1\}$ has statistical error at least $1/(4n-2)$ on some distribution in
the class.
\item For every $r\ge2$ and every $b\ge r$, there are a source class on
a key set of size $(2^r-1)2^{(r-1)b}$ and a perfectly correct $r$-out-of-$r$ scheme  for $b$-bit
messages, with $(b+1)$-bit shares, that is perfectly private for this class,
such that every function $\Ext$ to $\{0,1\}$ has statistical error at least
$1/(2(2^r-1))$ on some distribution in the class.
\end{enumerate}
In both cases every distribution in the class is flat, with min-entropy at
least $\log|\K|-\log3$, where $\K$ is the key set, and the bound on the error
is tight for the class.
\end{theorem}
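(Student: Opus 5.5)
The plan is to prove both parts with one template. The key is a pair $(v,x)$, where $v$ is a \emph{block label} from a small index set $J$ and $x$ is uniform over a vector space over $\F_{2^b}$. We take $|J|=2n-1$ and $x\in\F_{2^b}$ for two-out-of-$n$. We take $J=\F_2^r\setminus\{0\}$ and $x\in\F_{2^b}^{r-1}$ for $r$-out-of-$r$. This matches the stated key-set sizes $(2n-1)2^b$ and $(2^r-1)2^{(r-1)b}$.

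Each share is one bit derived from $v$, together with one field element that is an affine function of $(x,m)$. The coefficients of that affine function depend on the block $v$. Each source in the class is the uniform distribution on a union $\bigcup_{v\in T}\{v\}\times\F_{2^b}^{(\cdot)}$ for certain admissible label sets $T$. For $r$-out-of-$r$, a natural family is the complements of hyperplanes, $T=\{v:\langle u,v\rangle=1\}$ with $u\neq0$, of size $2^{r-1}$. For two-out-of-$n$, it is the analogous family of "two-thirds" unions, so that the min-entropy is at least $\log|\K|-\log 3$.

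The proof then has four steps.

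First, correctness. Given all authorized shares, the bits reveal enough of $v$ to fix the linear system, and the field elements determine $m$. The condition $2^b>n$ is what guarantees $n$ distinct evaluation points in $\F_{2^b}$.

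Second, privacy. For each admissible $T$ and each unauthorized set $A$ of parties, we show that the view of $A$ has the same distribution for every $m$ when $v$ is uniform on $T$. The field-element part is handled by a one-time-pad argument inside each block: for fixed $v$, the field elements seen by $A$ are uniform whenever $A$ sees fewer than the threshold. For the bit part, the map $v\mapsto$ (bits seen by $A$) must induce the same distribution on every admissible $T$, after we also condition on $m$. This is where the labels must be chosen so that each party's bit is a linear form in $v$ that is balanced on every admissible $T$.

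Third, the extraction lower bound. We use the fact that $\Pr_P[f=1]=N_T/|S_T|$, where $N_T$ is the number of ones of $f$ on $S_T$. The plan is to find nonnegative weights $\lambda_T$ and an integrality constraint showing that $\sum_T\lambda_T\,(N_T/|S_T|-\tfrac12)$ cannot vanish. For instance, each block $B_v$ is covered by the same number of admissible sets. Summing the counts $N_T$ therefore gives a fixed multiple of the total number of ones. Comparing with $|S_T|$ then forces a fractional offset of at least one block's worth divided by the total, namely $1/(2|J|)$. For $n=2$ this gives $1/6$. This argument treats $f$ only through per-block densities, so it applies to every $f$ and every output length $\ell$: extracting $\ell$ bits is at least as hard as extracting the first one.

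Fourth, tightness. We exhibit an $f$ that is constant on each block, with one block split in half, and check that it attains error exactly $1/(2|J|)$ on every admissible $T$.

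The main obstacle is the privacy step for the bit part of the shares in combination with the parity lower bound. The admissible family must be rich enough that no set of blocks has density exactly $1/2$ simultaneously in all of them, yet structured enough that every unauthorized set sees a balanced bit. I would try first the hyperplane-complement family over $\F_2^r$ with party $i$'s bit equal to $v_i$, and for two-out-of-$n$ an embedding of the $2n-1$ labels as points of a suitable configuration. If balance fails, I would add the bit $\langle u,v\rangle$ as a mask on one share and re-check privacy blockwise.
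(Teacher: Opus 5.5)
There is a genuine gap. Every source you propose is extractable with error zero, so your third step cannot succeed for any choice of admissible label sets. Each of your sources is uniform on a union of complete blocks $\{v\}\times\F_{2^b}^{(\cdot)}$, so conditioned on the label the component $x$ is exactly uniform. Let $\Ext(v,x)$ be a fixed nonzero $\F_2$-linear functional of the first coordinate of $x$. It equals $1$ on exactly half of every block, so $\Pr[\Ext(K)=1]=1/2$ under every source in your class.

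Your integrality argument fails exactly here. The atoms of your sources are blocks of $2^b$ (or $2^{(r-1)b}$) keys, an even number, so the signed sum of $(-1)^{\Ext}$ over a block can vanish and no offset of ``one block's worth'' is forced. The same feature is what makes your privacy step work: a one-time pad inside each block makes the view independent of $m$ under any mixture of blocks. Within your template, privacy and non-extractability are therefore in direct conflict, and no choice of tag bits or masks repairs this. With only boundedly many sources, independent of $b$, there is too little freedom.

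The missing idea is to let the source choose independently at every index of the large set, not at every label. In the paper, at each $i\in V$ the source contains either all $n-1$ central keys $C^{(c)}_i$ or all $n$ side keys $X^{(j)}_i$, which gives $2^{2^b}$ sources. For $r$-out-of-$r$, at each $u\in V^{r-1}$ it contains either the $2^{r-1}$ keys $(u,S)$ with $|S|$ odd or the $2^{r-1}-1$ keys with $S$ nonempty of even size. Your labels $\F_2^r\setminus\{0\}$ split by the all-ones hyperplane are exactly this odd/even split; what is missing is making the split separately at each $u$.

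Single keys are then the atoms. The quantity $v_i=\sum_c h(C^{(c)}_i)-\sum_j h(X^{(j)}_i)$ is a sum of $2n-1$ signs and hence odd. Taking $T=\{i:v_i>0\}$ gives $H_T-H_{T^c}=\sum_i|v_i|\ge B$ while $Z_T+Z_{T^c}=(2n-1)B$, so one of $P_T,P_{T^c}$ has error at least $1/(4n-2)$.

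Privacy then has to come from the share table rather than from a pad. For each party $j$, both choices at index $i$ deliver the true share $\alpha_j i+m$ exactly $n-1$ times: from the central keys, or from the side keys $X^{(j')}_i$ with $j'\ne j$. The dummy $\bar i$ carries only the index. In the $r$-out-of-$r$ case, the keys $S=R$ and $S=R\cup\{j_0\}$ that show a coalition $J$ the tag pattern $R\ne\emptyset$ have opposite parities, so exactly one is present at each $u$, and (P2) makes the values run over $V^J$. Tightness is likewise witnessed by a balanced function of the index $i$ (or $u$), not of a block label.
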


\subsection{Technical overview}

For the purpose of the technical overview, we focus on two-out-of-two sharing. 
\paragraph{Why the argument for encryption does not transfer.}
The proof of Bosley and Dodis considers, for each key $k$, the distribution
of the ciphertext of a uniform message under $k$. Decryptability makes each
of these $2^d$ distributions flat with min-entropy $b$, and so a random function
$f$ extracts from all these distributions simultaneously when $b>\log d$ and hence from $\mathrm{Enc}(U_b, K)$ where $K$ is a valid key source; privacy lets
one replace the uniform $b$-bit message $U_b$ by the fixed message $0^b$, and
$k\mapsto f(\mathrm{Enc}(0^b,k))$ is the extractor. 

Let us see which part of this idea carries over to two-out-of-two secret
sharing, and where it breaks down. Let the two shares of a $b$-bit message
$m$ be $\Shr_1(m,K)$ and $\Shr_2(m,K)$. Given $\Shr_2(U_b,K)$, the share
$\Shr_1(U_b,K)$ determines $U_b$, which is independent of $\Shr_2(U_b,K)$ by
privacy; hence $\Shr_1(U_b,K)$ has min-entropy at least $b$. By privacy,
$\Shr_1(0^b,K)$ has the same distribution as $\Shr_1(U_b,K)$, so it too has
min-entropy at least $b$. To extract from it with a single function, as
Bosley and Dodis do, one would write its distribution, for every source in
the class, as a convex combination of a fixed and ``small'' family of
high-min-entropy sources. For encryption this family is
$\{\mathrm{Enc}(U_b,k)\}_k$; for secret sharing we see no analogous family in
general. Unfortunately, any attempt to try to do this has to restrict the share size or the key size to be nearly $b$.

For secret sharing, the
object that is private (one share) and the object that determines the
message (both shares) are different. A single share, as a function of the
message, need not be injective, and may even be constant.

Our counter-example exploits exactly this difference as we will explain below. 

\paragraph{The three-key gadget.}
Let $V=\F_{2^b}$ (which has the same additive structure as $\F_2^b$) and let
$\alpha\notin\{0,1\}$ be any fixed element of $\F_{2^b}$. For every index $i\in V$ there are three
keys. The \emph{central} key $C_i$ shares $m$ as $(i+m,\alpha \cdot i+m)$, where addition and multiplication are in $\F_{2^b}$; each share is a
one-time-padded copy of $m$, and together they determine $m$. The \emph{left}
key $L_i$ gives the first party the same share $i+m$ and gives the
second party a dummy $\bar i$, a tagged copy of the index $i$. The tagged values $\{\bar i: i \in \F_{2^b}\}$ are distinct from the untagged
values in $\F_{2^b}$. The \emph{right}
key $R_i$ is symmetric, and gives the first party the share $\bar i$, and the second party the share $\alpha \cdot i + m$. The sources in our class are the uniform distributions on sets of keys that
contain, for each $i$, either the single key $C_i$ or the pair
$\{L_i,R_i\}$; there are $2^{2^b}$ of them.

Let us first see why this scheme is a perfect 2-out-of-2 secret sharing scheme. Clearly, given either (i) $i+m$ and $\alpha \cdot i + m$, or (ii) $i+m$ and $\bar i$, or (iii) $\bar i$ and $\alpha \cdot i + m$, it is straightforward to recover $m$. 

Now we argue that neither share reveals any information about the message.
Consider a source $K$ in the class, uniform on a set $Z$ of keys. Let
$T \subseteq \F_{2^b}$ be the set of indices $i$ with $C_i\in Z$, so that
$\F_{2^b} \setminus T$ is the set of indices $i$ with $L_i,R_i\in Z$.

Then, for any fixed message $m$, the first share is uniformly distributed over the set
\[
\{i+m: i \in T\} \; \cup \; \{i + m: i \in \F_{2^b} \setminus T\} \; \cup \; \{\bar i: i \in \F_{2^b} \setminus T\} \;,
\]
where the three sets correspond to center keys, left keys, and right keys. This implies that the first share is uniform over $\F_{2^b} \cup \{\bar i: i \in \F_{2^b} \setminus T\}$, and hence is independent of the message $m$. 

The argument for why the second share is independent of the message $m$ is identical.

\paragraph{Parity defeats every extractor.}
We now explain why there is no deterministic extractor that extracts even a single bit from all such randomness sources. 
For convenience, we will view a candidate one-bit extractor as a labelling $h$ of the keys by $\{-1, +1\}$. 

Notice that a valid randomness source $K$ is random over some set of keys $Z$, and this set $Z$ contains, for each index $i$, either $C_i$, or both $L_i$ and $R_i$. Thus, for $h$ to be an extractor, the output of $h$ must be (almost) unbiased regardless of whether $Z$ contains $C_i$ or both $L_i$ and $R_i$ for any index $i$. In other words, we want $$\frac{\sum_{k \in Z} h(k)}{|Z|}$$ to be small.  

We will argue that for any given $h$, there exists a choice of $Z$ for which the above mentioned bias is a constant.

At index $i$ the two allowed choices contribute $h(C_i)$ or $h(L_i)+h(R_i)$
to the signed sum. Their difference $h(C_i) - h(L_i) - h(R_i)$ is a sum of three $\pm 1$ values, and is hence odd, i.e. one of $\{+1,+3,-1,-3\}$. 

Let $Q$ contain $C_i$ for every index $i$ with
$h(C_i)-h(L_i)-h(R_i)>0$, and both $L_i$ and $R_i$ for every other index, and
let $Q^c$ be the set of the remaining keys, which makes the other choice at
every index. Both $Q$ and $Q^c$ are valid choices of $Z$, and
\[
\sum_{k \in Q} h(k) - \sum_{k \in Q^c} h(k)
=\sum_{i}\bigl|h(C_i)-h(L_i)-h(R_i)\bigr| \ge 2^b,
\qquad |Q| + |Q^c| = 3\cdot2^b.
\]
So one of the two sources has normalized signed sum at least $1/3$ in
absolute value, that is, statistical error at least $1/6$.

\subsection{Related work}
The separation of Dodis, Pietrzak, and Przydatek~\cite{DPP06} shares a single
bit; its entropy amplification enlarges the randomness space while retaining
that message space. The small-message counterexamples for encryption
in~\cite{BD07} yield sharing counterexamples as well, so the long-secret
regime is essential to our claim. Randomness requirements of other
primitives have been studied too: Santha--Vazirani and min-entropy sources
fail already for one-bit encryption~\cite{MP90,DOPS04}, computational
assumptions can partially substitute for extraction~\cite{KLR09}, and
in~\cite{ACMPS14}, the authors prove impossibility results for encryption, commitment and zero knowledge under efficient online tampering with the random tape.

Dodis and Yao~\cite[Definition~9, Theorem~4(c), and Remark~2]{DY15} derive a
different, weak extraction consequence from secret sharing. Their extractor
may output a failure symbol, and the lower bound on its success probability
is required under the uniform input distribution rather than under every
allowed source. This does not give the deterministic extraction considered
here. The positive results of~\cite{ACOR22} require leakage resilience or
non-malleability; \cref{rem:leak} shows the one-bit leakage attack on our
scheme that their theorem predicts.

Yao and Li~\cite{YL25} state an extraction implication for ordinary threshold
secret sharing. Their Proposition~2 extends the encryption reduction by
applying privacy to the complete tuple of shares, whereas ordinary sharing
hides only unauthorized subsets and the full tuple determines the message.
\Cref{app:comparison} identifies the step that fails, and shows that the
proposition itself is false for sharing.

\paragraph{Organization.}
\Cref{sec:model} contains definitions. \Cref{sec:construction} proves
\cref{thm:main}. \Cref{sec:twon} considers two-out-of-$n$ sharing and
\cref{sec:rr} extends the result to $r$-out-of-$r$ secret sharing. 

\section{Definitions}\label{sec:model}
All logarithms are base two. For a
nonempty finite set $A$, write $U_A$ for the uniform distribution on $A$, and
abbreviate $U_{\{0,1\}^\ell}$ by $U_\ell$. For distributions $P,P'$ on the
same finite set, the total variational distance (also called statistical distance, or statistical error) is defined as
\[
 \TV(P,P')=\frac12\sum_x|P(x)-P'(x)|=\max_A|P(A)-P'(A)|.
\]
For a random variable $X$ we write $\TV(X,P')$ for the distance between the
distribution of $X$ and $P'$. The min-entropy of $P$ is
$H_\infty(P)=-\log\max_xP(x)$. A \emph{flat} distribution is uniform on a
nonempty subset of its domain.

\begin{definition}[Threshold sharing from a source class]\label{def:ss}
Let $\K$ be a finite key set, $\M$ a message set, and $\Sour$ a class of
distributions on $\K$. A \emph{$t$-out-of-$n$ sharing scheme} $\Pi$ consists
of a deterministic map
\[\Shr=(\Shr_1,\dots,\Shr_n):\M\times\K\to\mathcal C_1\times\dots\times
\mathcal C_n\] together with a reconstruction map for every set of at least
$t$ parties. The dealer takes one sample $K$ from an unknown $P\in\Sour$ and
outputs $\Shr(m,K)$; there is no other randomness. For $J\subseteq[n]$ write
$\Shr_J$ for the shares of the parties in $J$.
\begin{itemize}
\item $\Pi$ is \emph{perfectly correct} if, for every $J$ with $|J|\ge t$,
all $m\in\M$, and all $k\in\K$, the reconstruction map $\Rec_J$ satisfies
\[
 \Rec_J(\Shr_J(m,k))=m .
\]
\item $\Pi$ is \emph{$\delta$-private for $\Sour$} if for every $P\in\Sour$,
every $J$ with $|J|<t$, and all $m,m'\in\M$,
\[
 \TV\bigl(\Shr_J(m,K),\Shr_J(m',K)\bigr)\le\delta,\qquad K\sim P.
\]
It is \emph{perfectly private} if $\delta=0$.
\end{itemize}
For $t=n=2$ we write $F=\Shr_1$ and $G=\Shr_2$.
\end{definition}
The distribution of a share may depend on $P$; it must not depend on the
message.

\begin{definition}[Extraction error]\label{def:ext}
For $\Ext:\K\to[D]$ and a class $\Sour$ of distributions on $\K$, let
\[
 \err(\Ext,\Sour)=\sup_{P\in\Sour}\TV\bigl(\Ext(K),U_{[D]}\bigr),
 \qquad K\sim P.
\]
The class is \emph{$(\ell,\varepsilon)$-extractable} if some
$\Ext:\K\to\{0,1\}^\ell$ has $\err(\Ext,\Sour)\le\varepsilon$.
\end{definition}
This is the formulation of~\cite{BD07,ACOR22}. Composing an $\ell$-bit
extractor with the projection to its first bit does not increase the error,
so a lower bound on one-bit error applies to every output length.

\section{The construction}\label{sec:construction}
Let $b \ge 2$, $V=\F_{2^b}$ and $B=2^b$. Fix an $\alpha\in \F_{2^b} \setminus \{0,1\}$. 

The key set is $\K=\{L,R,C\}\times V$, and we write $L_i,R_i,C_i$ for
$(L,i),(R,i),(C,i)$. Both share alphabets are $\{0,1\}\times V$; we write $x$
for $(0,x)$ and $\bar x$ for $(1,x)$, and call the latter \emph{tagged}. The
bar is a formal tag, not an operation on $V$. The message set is $\M=V$, and
the scheme $\Pi_b$ shares $m\in V$ as follows:
\begin{equation}\label{eq:sharing}
\begin{array}{c|cc}
\text{key }k&F(m,k)&G(m,k)\\\hline
L_i&i+m&\bar i\\
R_i&\bar i&\alpha \cdot i+m\\
C_i&i+m&\alpha \cdot i+m
\end{array}
\end{equation}
For every $T\subseteq V$ define
\begin{equation}\label{eq:sources}
 \Omega_T=\{C_i:i\in T\}\cup\{L_i,R_i:i\notin T\},\qquad
 Z_T=|\Omega_T|=2B-|T|,\qquad P_T=U_{\Omega_T}.
\end{equation}
The source class is $\Sour_b=\{P_T:T\subseteq V\}$. It is fixed once and for
all, independently of the message and of any candidate extractor. Since
$B\le Z_T\le2B$, every $P_T$ is flat with min-entropy at least $b$.

\subsection{Correctness, privacy, and non-extractability}
\begin{lemma}[Correctness]\label{lem:correct}
There is a map $\Rec$ with $\Rec(F(m,k),G(m,k))=m$ for all $m\in V$ and
$k\in\K$.
\end{lemma}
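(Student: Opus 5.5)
We define $\Rec$ explicitly by cases on which of the two shares carry the tag. Recall that both share alphabets are $\{0,1\}\times V$. Inspecting the table~\eqref{eq:sharing}, the tag pattern of the pair $(F(m,k),G(m,k))$ identifies the type of the key:
\begin{itemize}
\item untagged first share and tagged second share only for $L_i$;
\item tagged first share and untagged second share only for $R_i$;
\item both shares untagged only for $C_i$.
\end{itemize}
No key produces two tagged shares. So $\Rec$ may first read the two tag bits and then apply the inversion rule appropriate to that key type. On the unused pattern (both tagged) it outputs an arbitrary fixed value, say $0$.

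The three decoding rules are as follows.
\begin{itemize}
\item For a pair $(x,\bar y)$, which arises only from some $L_i$ with $x=i+m$ and $y=i$, output $x+y$. Since $V$ has characteristic two, this equals $i+m+i=m$.
\item For a pair $(\bar y,z)$, which arises only from $R_i$ with $y=i$ and $z=\alpha\cdot i+m$, output $z+\alpha\cdot y$. This equals $\alpha i+m+\alpha i=m$.
\item For a pair $(x,z)$ with both shares untagged, which arises only from $C_i$ with $x=i+m$ and $z=\alpha i+m$, we solve the linear system for $(i,m)$. Adding the two equations gives $x+z=(1+\alpha)i$. Since $\alpha\ne1$, the element $1+\alpha$ is nonzero in $\F_{2^b}$, so $i=(1+\alpha)^{-1}(x+z)$. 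Then output $m=x+i=x+(1+\alpha)^{-1}(x+z)$.
\end{itemize}

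Finally, we check that $\Rec(F(m,k),G(m,k))=m$ for every $m\in V$ and every $k\in\K$. This follows by substituting each of the three rows of the table into the corresponding rule, as done above. The only nontrivial point is the invertibility of $1+\alpha$ in the central case; this is exactly where the choice $\alpha\notin\{0,1\}$ is used (only $\alpha\neq1$ is needed here). $\Rec$ uses a constant number of field operations.
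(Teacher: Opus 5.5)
Your proof is correct and follows the same route as the paper: read the tag pattern to identify the branch $L_i$, $R_i$ or $C_i$, then invert the corresponding linear map. In the central case your output $x+(1+\alpha)^{-1}(x+z)$ simplifies to the paper's $(\alpha x+z)/(\alpha+1)$, and both use only $\alpha\neq1$.
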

\begin{proof}
The pattern of tags identifies the branch. If only the second share is
tagged, the shares are $(x,\bar i)$ with $x=i+m$, and $\Rec$ outputs $x+i$.
If only the first share is tagged, the shares are $(\bar i,y)$ with
$y=\alpha \cdot i+m$, and $\Rec$ outputs $y+\alpha \cdot i$. If neither share is tagged, the shares
are $(x,y)=(i+m,\alpha \cdot i+m)$, so $y+ \alpha \cdot x =(\alpha + 1) \cdot m$ and $\Rec$ outputs
$(\alpha \cdot x + y)/(\alpha + 1)$.
\end{proof}

\begin{lemma}[Privacy]\label{lem:private}
For every $T\subseteq V$ and every $m\in V$, if $K\sim P_T$ then each of
$F(m,K)$ and $G(m,K)$ is uniform on
\begin{equation}\label{eq:marginal}
 V\;\cup\;\{\bar i:i\notin T\}.
\end{equation}
In particular $\Pi_b$ is perfectly private for $\Sour_b$.
\end{lemma}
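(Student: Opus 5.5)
The plan is a direct count: show that $k\mapsto F(m,k)$ restricted to $\Omega_T$ is a bijection onto the set in \eqref{eq:marginal}, and do the same for $G$. Since $K$ is uniform on $\Omega_T$, this gives that $F(m,K)$ is uniform on that set. The set does not depend on $m$, so the distribution of each share is the same for every message, which is perfect privacy in the sense of \cref{def:ss} for $t=n=2$ and $|J|\le1$.

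For $F$, I split $\Omega_T$ by key type, reading the images off \eqref{eq:sharing}. The central keys $C_i$ with $i\in T$ map to $i+m$, and the left keys $L_i$ with $i\notin T$ also map to $i+m$. As $i$ ranges over all of $V$ (over $T$ from the central keys and over $V\setminus T$ from the left keys), the untagged outputs are $\{i+m:i\in V\}=V$. The map $i\mapsto i+m$ is a bijection of $V$, so each element of $V$ is hit exactly once. The right keys $R_i$ with $i\notin T$ map to the tagged values $\bar i$, giving exactly $\{\bar i:i\notin T\}$, each hit once. Tagged and untagged values are distinct elements of $\{0,1\}\times V$, so the total map is injective from $\Omega_T$ onto $V\cup\{\bar i:i\notin T\}$. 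The sizes match as a sanity check: $|\Omega_T|=2B-|T|=B+(B-|T|)$.

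For $G$ the argument is symmetric. The central keys $C_i$ with $i\in T$ and the right keys $R_i$ with $i\notin T$ map to $\alpha\cdot i+m$. Here one needs $i\mapsto\alpha\cdot i+m$ to be a bijection of $V$, which holds because $\alpha\neq0$. The left keys $L_i$ with $i\notin T$ give the tagged values $\bar i$ for $i\notin T$. The conclusion is again a bijection onto the same set.

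There is no real obstacle. The only points needing care are that the central and side keys partition the index set $V$ so that every untagged value appears exactly once, and that $\alpha\ne0$ makes the second map bijective.
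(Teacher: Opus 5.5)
Your proposal is correct and is essentially the paper's proof: both note that at each index exactly one of $C_i,L_i$ (respectively $C_i,R_i$) lies in $\Omega_T$ and gives the untagged share $i+m$ (respectively $\alpha\cdot i+m$), both use that these maps are permutations of $V$ (the second because $\alpha\neq0$), and both attribute the tagged values $\bar i$, $i\notin T$, to the keys $R_i$ (respectively $L_i$). Your bijection from $\Omega_T$ onto $V\cup\{\bar i:i\notin T\}$, with the size check $|\Omega_T|=2B-|T|$, just restates the paper's observation that every value in this set has probability $1/Z_T$.
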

\begin{proof}
Fix $T$ and $m$. For each $i$, exactly one of $L_i,C_i$ lies in $\Omega_T$,
and either key has first share $i+m$. As $i\mapsto i+m$ is a permutation of
$V$, each untagged value has probability exactly $1/Z_T$. The tagged first
shares arise from the keys $R_i$ with $i\notin T$; each value $\bar i$ with
$i\notin T$ has probability $1/Z_T$. 

For the second share the argument is
the same: exactly one of $R_i,C_i$ lies in $\Omega_T$, either key has second
share $\alpha \cdot i+m$, and $i\mapsto \alpha \cdot i+m$ is a permutation because $\alpha \neq 0$; the tagged values come from $L_i$, $i\notin T$. The
set~\eqref{eq:marginal}, on which each share is uniform over, does not depend on $m$.
\end{proof}

\begin{lemma}[No bit can be extracted]\label{lem:noext}
For every $\Ext:\K\to\{0,1\}$ there is $T\subseteq V$ such that $\Ext$ has
statistical error at least $1/6$ on $P_T$ or on $P_{T^c}$, where
$T^c=V\setminus T$.
\end{lemma}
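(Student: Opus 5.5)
We follow the parity argument from the technical overview. Put $h(k)=(-1)^{\Ext(k)}\in\{\pm1\}$. For a flat source $P_T$, the extractor's error is
\[
\TV(\Ext(K),U_1)=\frac{|S_T|}{2Z_T},\qquad S_T=\sum_{k\in\Omega_T}h(k),
\]
since $\Pr[\Ext(K)=0]-\tfrac12=S_T/(2Z_T)$. It therefore suffices to find a $T$ with $|S_T|/Z_T\ge1/3$ or $|S_{T^c}|/Z_{T^c}\ge1/3$.

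For each $i\in V$ set $d_i=h(C_i)-h(L_i)-h(R_i)$. This is a sum of three terms in $\{\pm1\}$, so it is odd, and hence $|d_i|\ge1$. Choose
\[
T=\{i: d_i>0\}.
\]
Index $i$ contributes $h(C_i)$ to $S_T$ if $i\in T$ and $h(L_i)+h(R_i)$ otherwise. The complementary set $\Omega_{T^c}$ makes the opposite choice at every index. Subtracting index by index gives
\[
S_T-S_{T^c}=\sum_{i\in V}|d_i|\ge B.
\]
Also $\Omega_T$ and $\Omega_{T^c}$ partition $\K$, so $Z_T+Z_{T^c}=3B$.

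To finish, suppose both $|S_T|<Z_T/3$ and $|S_{T^c}|<Z_{T^c}/3$. Then
\[
B\le S_T-S_{T^c}\le|S_T|+|S_{T^c}|<(Z_T+Z_{T^c})/3=B,
\]
a contradiction. So one of the two sources has $|S|/Z\ge1/3$, and its error is at least $1/6$.

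No step here is a real obstacle. The only point needing care is the bookkeeping showing that $\Omega_T$ and $\Omega_{T^c}$ make complementary choices at every index, so that the two signed sums differ by exactly $\sum_i|d_i|$. The odd-parity lower bound $|d_i|\ge1$ is the key observation and is immediate.
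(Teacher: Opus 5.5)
Your proposal is correct and follows the paper's proof essentially step for step: the same signed labelling $h$, the same odd quantities $d_i=h(C_i)-h(L_i)-h(R_i)$, the same choice $T=\{i:d_i>0\}$, and the same two identities $S_T-S_{T^c}=\sum_i|d_i|\ge B$ and $Z_T+Z_{T^c}=3B$. The only difference is that you finish by contradiction where the paper uses the averaging inequality $\max\{|H_T|/Z_T,|H_{T^c}|/Z_{T^c}\}\ge(|H_T|+|H_{T^c}|)/(Z_T+Z_{T^c})$, and the two are equivalent.
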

\begin{proof}
Let $h(k)=(-1)^{\Ext(k)}$ and, for $T\subseteq V$, let
\[H_T:=\sum_{k\in\Omega_T}h(k) = \sum_{i \in T} h(C_i) + \sum_{i \in T^c} \Big(h(L_i)+ h(R_i)\Big)\;.\] For $K\sim P_T$ the distance of $\Ext(K)$
from $U_1$ equals $|H_T|/(2Z_T)$. For each $i\in V$ the integer
\[
 v_i=h(C_i)-h(L_i)-h(R_i)
\]
is a sum of three odd numbers and is therefore odd; in particular
$|v_i|\ge1$. Let $T=\{i:v_i>0\}$. Then, $T^c = \{i: v_i < 0\}$. In particular, for each $i \in T$, $v_i$ is either $1$ or $3$, and for each $i$ in $T^c$, $v_i$ is either $-1$, or $-3$. Then
\[
 H_T-H_{T^c}=\sum_{i\in T}v_i-\sum_{i\notin T}v_i=\sum_{i\in V}|v_i|\ge B,
 \qquad Z_T+Z_{T^c}=3B.
\]
Consequently
\begin{equation}\label{eq:bias}
 \max\left\{\frac{|H_T|}{Z_T},\frac{|H_{T^c}|}{Z_{T^c}}\right\}
 \ge\frac{|H_T|+|H_{T^c}|}{Z_T+Z_{T^c}}\ge\frac{H_T-H_{T^c}}{3B}\ge\frac13,
\end{equation}
and the source corresponding to the maximum ($\Omega_T$ or $\Omega_{T^c}$) has error at least $1/6$.
\end{proof}

\subsection{Bit strings and running time}\label{sec:encoding}
\begin{proof}[Proof of \cref{thm:main}]
Encode a key by two branch bits followed by the $b$-bit index, and a share
by its tag bit followed by $b$ bits. Three of the four branches are $L,R,C$.
To define the scheme on all of $\{0,1\}^{b+2}$, let the fourth branch consist
of keys $\Cb_i$, tagged copies of the central keys, with
\[
 \Shr(m,\Cb_i)=\bigl(\,\overline{i+m},\,\overline{\alpha \cdot i+m}\,\bigr);
\]
on two tagged shares, $\Rec$ removes the tags and applies the rule of the
central branch. Correctness on $\K$ is \cref{lem:correct}. Every $P_T$
assigns probability zero to the fourth branch, so \cref{lem:private} gives
privacy. The set $T$ in \cref{lem:noext} depends only on the restriction of
the extractor to $\K$, so the lemma applies verbatim to functions on
$\{0,1\}^{b+2}$, and the remark after \cref{def:ext} extends it to every
output length. Items~1 and~2 are immediate.

The running time for sharing and reconstruction algorithms is immediate from the construction. 
\end{proof}

\begin{remark}[Consistency with the leakage-resilient case]\label{rem:leak}
By~\cite{ACOR22}, a class that supports a two-out-of-two scheme resisting
one bit of leakage from each share is extractable, so $\Pi_b$ must be
vulnerable to such leakage. It is. Write $\operatorname{lsb}(z)$ for the
least significant bit of $z\in V$ in a fixed basis over $\F_2$, which is
$\F_2$-linear in $z$. For untagged shares $x$ and $y$ of the two parties let
$f(x)=\operatorname{lsb}(\alpha\cdot x)$ and $g(y)=\operatorname{lsb}(y)$
(on tagged shares $f$ and $g$ may be arbitrary). Under the source $P_V$,
which is uniform on the central keys, the two leaked bits satisfy
\[
 f(i+m)+g(\alpha\cdot i+m)=\operatorname{lsb}\bigl((\alpha+1)\cdot m\bigr)
 \qquad\text{for every } i,
\]
which is a nontrivial linear functional of the secret. Hence the leakage
distributions of two secrets on which this functional differs have disjoint
supports.
\end{remark}

\section{Two-out-of-\texorpdfstring{$n$}{n} sharing}\label{sec:twon}
We now extend the construction of \cref{sec:construction} from two to $n\ge2$
parties, keeping the threshold at two. The left and right keys become $n$
\emph{side keys}, one for each party, and the central key is repeated $n-1$
times.

\subsection{The construction}
Let $n\ge2$ and let $b$ satisfy $2^b>n$. Let $V=\F_{2^b}$ and $B=2^b$, and fix
distinct nonzero elements $\alpha_1,\dots,\alpha_n\in V$, which exist because
$2^b-1\ge n$. For every index $i\in V$ there are $2n-1$ keys: $n-1$
\emph{central} keys $C^{(1)}_i,\dots,C^{(n-1)}_i$ and $n$ \emph{side} keys
$X^{(1)}_i,\dots,X^{(n)}_i$. The key set is
\[
 \K_n=\bigl\{C^{(c)}_i:c\in[n-1],\ i\in V\bigr\}\cup
      \bigl\{X^{(j)}_i:j\in[n],\ i\in V\bigr\},
 \qquad|\K_n|=(2n-1)B .
\]
As in \cref{sec:construction}, the message set is $V$, every share alphabet
is $\{0,1\}\times V$, and we write $x$ for the untagged value $(0,x)$ and
$\bar x$ for the tagged value $(1,x)$. The scheme $\Pi_{b,n}$ gives party
$j\in[n]$ the share
\begin{equation}\label{eq:twon}
 \Shr_j\bigl(m,C^{(c)}_i\bigr)=\alpha_ji+m,\qquad
 \Shr_j\bigl(m,X^{(j')}_i\bigr)=
 \begin{cases}\bar i&\text{if }j=j',\\ \alpha_ji+m&\text{if }j\ne j'.\end{cases}
\end{equation}
Thus every central key gives every party its \emph{true share} $\alpha_ji+m$,
and the side key $X^{(j')}_i$ gives party $j'$ the dummy $\bar i$ and every
other party its true share. The $n-1$ central keys at an index give identical
shares; they differ only as keys. For $n=2$, with $\alpha_1=1$ and
$\alpha_2=\alpha$, this is the scheme $\Pi_b$ of~\eqref{eq:sharing}, with
$C_i=C^{(1)}_i$, $L_i=X^{(2)}_i$, and $R_i=X^{(1)}_i$.

For every $T\subseteq V$ define
\begin{equation}\label{eq:twon-sources}
\begin{gathered}
 \Omega_T=\bigl\{C^{(c)}_i:i\in T,\ c\in[n-1]\bigr\}\cup
 \bigl\{X^{(j)}_i:i\notin T,\ j\in[n]\bigr\},\\
 Z_T=|\Omega_T|=nB-|T|,\qquad P_T=U_{\Omega_T}.
\end{gathered}
\end{equation}
The source class is $\Sour_{b,n}=\{P_T:T\subseteq V\}$. As in
\cref{sec:construction}, at every index a source contains either all central
keys or all side keys, and for $n=2$ these are the sources
of~\eqref{eq:sources}. Since $(n-1)B\le Z_T\le nB$, every $P_T$ is flat with
min-entropy at least $b+\log(n-1)$.

The number of central keys is dictated by privacy. Fix a party $j$. At an
index $i\notin T$, party $j$ receives its true share $\alpha_ji+m$ from the
$n-1$ side keys $X^{(j')}_i$ with $j'\ne j$, and a dummy from $X^{(j)}_i$.
The $n-1$ central keys make it receive its true share equally often at an
index $i\in T$, and therefore every untagged value is equally likely,
whatever $T$ and $m$ are.

\begin{theorem}\label{thm:twon}
Let $n\ge2$ and let $b$ satisfy $2^b>n$. The scheme $\Pi_{b,n}$ is a perfectly
correct two-out-of-$n$ scheme for $b$-bit messages that is perfectly private
for $\Sour_{b,n}$, and:
\begin{enumerate}
\item each share has $b+1$ bits, and the key set $\K_n$ has $(2n-1)2^b$
elements;
\item every distribution in $\Sour_{b,n}$ is uniform on a set of between
$(n-1)2^b$ and $n2^b$ keys, and hence has min-entropy at least
$b+\log(n-1)$;
\item for every $\ell\ge1$ and every function
$\Ext:\K_n\to\{0,1\}^\ell$,
\[
 \max_{P\in\Sour_{b,n}}\TV(\Ext(K),U_\ell)\ \ge\ \frac1{4n-2},
 \qquad K\sim P;
\]
\item for $\ell=1$, some function of the index $i$ alone attains equality in
item~3.
\end{enumerate}
\end{theorem}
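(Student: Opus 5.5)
The plan is to check correctness and privacy directly from \eqref{eq:twon}, and then to prove item~3 by the parity argument of \cref{lem:noext}, with $2n-1$ odd terms in place of three. Item~4 then follows from an explicit extractor that depends only on the index.

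\emph{Correctness.} A side key tags exactly one party and a central key tags none, so two parties $j\ne j'$ never both hold tagged shares. If one of them holds $\bar i$, it learns $i$. It then subtracts $\alpha_{j''}i$ from the other party's untagged share, where $j''$ is that party, and recovers $m$. If neither share is tagged, the shares are $x=\alpha_ji+m$ and $y=\alpha_{j'}i+m$. Then $x+y=(\alpha_j+\alpha_{j'})i$ in characteristic two, and $\alpha_j+\alpha_{j'}\ne0$ because the $\alpha$'s are distinct. So $\Rec_{\{j,j'\}}$ computes $i=(x+y)/(\alpha_j+\alpha_{j'})$ and then $m=x+\alpha_ji$.

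\emph{Privacy.} Fix a party $j$, a set $T$ and a message $m$. At an index $i\in T$, party $j$ receives $\alpha_ji+m$ from all $n-1$ central keys. At $i\notin T$, it receives the same value from the $n-1$ side keys $X^{(j')}_i$ with $j'\ne j$, and receives $\bar i$ from $X^{(j)}_i$. Since $\alpha_j\ne0$, the map $i\mapsto\alpha_ji+m$ is a permutation of $V$. Hence every untagged value has probability $(n-1)/Z_T$, and each $\bar i$ with $i\notin T$ has probability $1/Z_T$. This distribution does not depend on $m$, which is all that privacy requires, even though the share is no longer uniform for $n>2$. Items~1 and~2 are immediate from the definitions.

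\emph{Item~3.} Let $h=(-1)^{\Ext}$, where by the remark after \cref{def:ext} it suffices to take one-bit outputs. Let $H_T=\sum_{k\in\Omega_T}h(k)$, so that the error on $P_T$ is $|H_T|/(2Z_T)$. Put
\[
 v_i=\sum_{c\in[n-1]}h\bigl(C^{(c)}_i\bigr)-\sum_{j\in[n]}h\bigl(X^{(j)}_i\bigr).
\]
This is a sum of $2n-1$ odd numbers, hence odd, so $|v_i|\ge1$. Take $T=\{i:v_i>0\}$. Then $H_T-H_{T^c}=\sum_i|v_i|\ge B$, while $Z_T+Z_{T^c}=(2n-1)B$. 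The averaging inequality \eqref{eq:bias} gives $\max\{|H_T|/Z_T,|H_{T^c}|/Z_{T^c}\}\ge1/(2n-1)$, so the error on $P_T$ or on $P_{T^c}$ is at least $1/(4n-2)$.

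\emph{Item~4.} This is the only step needing a small optimisation. Let $\Ext(k)$ depend only on the index $i$ of $k$, via signs $s_i$ with $\sum_is_i=0$; this is possible because $B$ is even. Each index in $T$ contributes $(n-1)s_i$ and each index outside $T$ contributes $ns_i$, so
\[
 H_T=n\sum_{i\in V}s_i-\sum_{i\in T}s_i=-\sum_{i\in T}s_i .
\]
Write $t=|T|$. Since the signs are balanced, $|H_T|\le\min(t,B-t)$, and $Z_T=nB-t$. The ratio $t/(nB-t)$ is increasing in $t$, and the ratio $(B-t)/(nB-t)$ is decreasing in $t$. So $|H_T|/Z_T$ is maximised at $t=B/2$, where it equals $1/(2n-1)$, giving error at most $1/(4n-2)$. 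Equality is attained by taking $T=\{i:s_i=1\}$.
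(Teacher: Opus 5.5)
Your proposal is correct and follows the paper's proof essentially step for step: tag-pattern reconstruction, exactly $n-1$ true shares per index for privacy, and the parity argument with the odd integers $v_i$ and \eqref{eq:bias}. In item~4 you bound $|H_T|\le\min(t,B-t)$ and optimise over $t=|T|$, while the paper bounds the preimage weights $A_0,A_1$ of a balanced $g$; these are the same computation. The only thing to add is the paper's one-line remark that a set of more than two parties reconstructs from any two of its members, since perfect correctness requires a map $\Rec_J$ for every $J$ with $|J|\ge2$.
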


For $n=2$ the scheme is $\Pi_b$ and the bound is the $1/6$ of
\cref{thm:main}; item~4 shows that this bound is exact for $\Sour_b$.

\subsection{Correctness, privacy, and non-extractability}
\begin{lemma}[Correctness]\label{lem:correct-n}
For any two distinct parties $j,j'$ there is a map $\Rec_{\{j,j'\}}$ such that
\[
 \Rec_{\{j,j'\}}\bigl(\Shr_j(m,k),\Shr_{j'}(m,k)\bigr)=m
 \qquad\text{for all }m\in V\text{ and }k\in\K_n.
\]
\end{lemma}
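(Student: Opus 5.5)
The plan is to follow the proof of \cref{lem:correct}: the tags tell us which kind of key was used, and the untagged values are then decoded using the coefficients $\alpha_j$. Fix distinct parties $j,j'$ and define $\Rec_{\{j,j'\}}$ on a pair of shares $(s,s')\in(\{0,1\}\times V)^2$ by cases on the tag pattern. By \eqref{eq:twon}, a key gives a tagged share only to the party named by a side key. Every key therefore tags at most one party, and the pattern ``both tagged'' never occurs; on such inputs $\Rec_{\{j,j'\}}$ may output anything.

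The three remaining cases are handled as follows.
\begin{itemize}
\item If $s=\bar i$ is tagged and $s'=y$ is untagged, the key must be $X^{(j)}_i$. Since $j'\neq j$, party $j'$ holds its true share $y=\alpha_{j'}i+m$, and $\Rec_{\{j,j'\}}$ outputs $y+\alpha_{j'}i$.
\item The case where only $s'=\bar i$ is tagged is symmetric, with output $x+\alpha_j i$ for $s=x$.
\item If both shares are untagged, the key is a central key $C^{(c)}_i$ or a side key $X^{(j'')}_i$ with $j''\notin\{j,j'\}$. In either case $(x,y)=(\alpha_j i+m,\alpha_{j'}i+m)$.
\end{itemize}

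In the untagged case, eliminate $i$ to recover $m$: we have $\alpha_{j'}x+\alpha_j y=(\alpha_j+\alpha_{j'})m$, using characteristic two. Since $\alpha_j\neq\alpha_{j'}$, the coefficient $\alpha_j+\alpha_{j'}$ is nonzero. Hence $\Rec_{\{j,j'\}}$ outputs $(\alpha_{j'}x+\alpha_j y)/(\alpha_j+\alpha_{j'})$. Equivalently, $x+y=(\alpha_j+\alpha_{j'})i$, so one can solve for $i$ and then for $m$.

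The only point that needs care is this untagged case. Here one must check that the decoding rule does not depend on which key produced the shares (the index $c$ or the identity of $j''$). It does not, because every such key gives both parties their true shares. The rule uses that the $\alpha_j$ are pairwise distinct; nonzeroness of the $\alpha_j$ matters only for privacy, not for correctness. With these three cases, $\Rec_{\{j,j'\}}(\Shr_j(m,k),\Shr_{j'}(m,k))=m$ holds for every $k\in\K_n$ and every $m\in V$.
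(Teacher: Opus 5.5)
Your proof is correct and follows the paper's argument: no key tags both shares, the tag pattern identifies the case, and the untagged case is decoded using $\alpha_j\ne\alpha_{j'}$. Your closed form $(\alpha_{j'}x+\alpha_j y)/(\alpha_j+\alpha_{j'})$ is equivalent to the paper's $x+\alpha_j(x+y)/(\alpha_j+\alpha_{j'})$, which, as you also note, amounts to solving for $i$ first.
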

\begin{proof}
No key gives a dummy to two parties, so at least one of the two shares is
untagged, and, as in \cref{lem:correct}, the pattern of tags identifies the
case. If party $j$ holds the dummy $\bar i$ and party $j'$ holds
$y=\alpha_{j'}i+m$, then $\Rec_{\{j,j'\}}$ outputs $y+\alpha_{j'}i$; the case
in which party $j'$ holds the dummy is symmetric. If neither share is tagged,
the shares are $x=\alpha_ji+m$ and $y=\alpha_{j'}i+m$, so
$x+y=(\alpha_j+\alpha_{j'})i$ with $\alpha_j\ne\alpha_{j'}$, and
$\Rec_{\{j,j'\}}$ outputs $x+\alpha_j(x+y)/(\alpha_j+\alpha_{j'})$.
\end{proof}

\begin{lemma}[Privacy]\label{lem:private-n}
For every $T\subseteq V$, every $m\in V$, and every party $j$, if $K\sim P_T$
then $\Shr_j(m,K)$ takes each untagged value $x\in V$ with probability
$(n-1)/Z_T$ and each tagged value $\bar i$ with $i\notin T$ with probability
$1/Z_T$. In particular $\Pi_{b,n}$ is perfectly private for $\Sour_{b,n}$.
\end{lemma}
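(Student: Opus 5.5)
The plan is a direct count, mirroring \cref{lem:private}. Fix $T\subseteq V$, $m\in V$ and a party $j$, and split $\Omega_T$ by index $i$. Each index contributes a block of keys, and we record which share party $j$ receives from each key in that block.

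For an index $i\in T$, the block consists of the $n-1$ central keys $C^{(1)}_i,\dots,C^{(n-1)}_i$. By \eqref{eq:twon}, each of them gives party $j$ the untagged value $\alpha_ji+m$. So this index contributes $n-1$ keys mapping to $\alpha_ji+m$.

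For an index $i\notin T$, the block consists of the $n$ side keys $X^{(1)}_i,\dots,X^{(n)}_i$. The $n-1$ keys $X^{(j')}_i$ with $j'\ne j$ again give party $j$ the value $\alpha_ji+m$. The single key $X^{(j)}_i$ gives it the dummy $\bar i$. So this index also contributes exactly $n-1$ keys mapping to $\alpha_ji+m$, plus one key mapping to $\bar i$.

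Now count the untagged values. Every index, whether in $T$ or not, contributes exactly $n-1$ keys to the value $\alpha_ji+m$. Since $\alpha_j\ne0$, the map $i\mapsto\alpha_ji+m$ is a permutation of $V$. Hence each $x\in V$ is hit by exactly $n-1$ keys of $\Omega_T$, and has probability $(n-1)/Z_T$ under $P_T$.

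Next count the tagged values. A tagged share $\bar i$ arises only from the side key $X^{(j)}_i$ with $i\notin T$, and distinct indices give distinct tagged values. So each $\bar i$ with $i\notin T$ has probability $1/Z_T$, and every other tagged value has probability $0$.

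As a sanity check, the total mass is $\bigl((n-1)B+(B-|T|)\bigr)/Z_T=1$, using $Z_T=nB-|T|$.

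This distribution depends only on $T$ and not on $m$. Privacy for two-out-of-$n$ sharing only concerns sets $J$ with $|J|<2$, that is, single parties. So $\TV(\Shr_j(m,K),\Shr_j(m',K))=0$ for all $m,m'$, and $\Pi_{b,n}$ is perfectly private for $\Sour_{b,n}$.

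There is no real obstacle. The one point to get right is that the $n-1$ copies of the central key exactly balance the $n-1$ true shares coming from the side keys at an index outside $T$; this is precisely what makes the untagged count independent of $T$.
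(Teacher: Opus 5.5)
Your proof is correct and is essentially the paper's argument: you count, index by index, the $n-1$ keys of $\Omega_T$ that give party $j$ its true share $\alpha_j i+m$ (the central keys if $i\in T$, the side keys $X^{(j')}_i$ with $j'\ne j$ otherwise), use that $i\mapsto\alpha_j i+m$ is a permutation of $V$, and attribute the tagged values to the keys $X^{(j)}_i$ with $i\notin T$. The total-mass check and the remark that threshold-two privacy involves only single parties (the empty coalition being trivial) are harmless additions.
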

\begin{proof}
Fix $T$, $m$, and $j$. For each $i\in V$, exactly $n-1$ keys of $\Omega_T$ at
index $i$ give party $j$ the share $\alpha_ji+m$: the central keys
$C^{(1)}_i,\dots,C^{(n-1)}_i$ if $i\in T$, and the side keys $X^{(j')}_i$
with $j'\ne j$ if $i\notin T$. Since $\alpha_j\ne0$, the map
$i\mapsto\alpha_ji+m$ is a permutation of $V$, so each untagged value has
probability exactly $(n-1)/Z_T$. The tagged shares of party $j$ come from the
keys $X^{(j)}_i$ with $i\notin T$, one for each such $i$. This distribution does not
depend on $m$.
\end{proof}
For $n=2$ the distribution is uniform, as in \cref{lem:private}; for $n\ge3$ it is not
uniform, but it still does not depend on the message.

\begin{lemma}[No bit can be extracted]\label{lem:noext-n}
For every $\Ext:\K_n\to\{0,1\}$ there is $T\subseteq V$ such that $\Ext$ has
statistical error at least $1/(4n-2)$ on $P_T$ or on $P_{T^c}$. Conversely,
some $\Ext$ that depends only on the index $i$ has error at most $1/(4n-2)$ on
every source in $\Sour_{b,n}$, with equality for some source.
\end{lemma}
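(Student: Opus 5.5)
The plan is to copy the parity argument of \cref{lem:noext} for the lower bound, and for the converse to exhibit an explicit balanced labelling of the indices.

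\textbf{Lower bound.} Set $h(k)=(-1)^{\Ext(k)}$ and $H_T=\sum_{k\in\Omega_T}h(k)$. As in \cref{lem:noext}, the error of $\Ext$ on $P_T$ is $|H_T|/(2Z_T)$. Split $H_T$ index by index:
\[
H_T=\sum_{i\in T}c_i+\sum_{i\notin T}s_i,\qquad c_i=\sum_{c\in[n-1]}h\bigl(C^{(c)}_i\bigr),\qquad s_i=\sum_{j\in[n]}h\bigl(X^{(j)}_i\bigr).
\]
The integer $v_i=c_i-s_i$ is a signed sum of $(n-1)+n=2n-1$ values $\pm1$. It is therefore odd, so $|v_i|\ge1$. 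Take $T=\{i:v_i>0\}$. Then
\[
H_T-H_{T^c}=\sum_{i\in V}|v_i|\ge B,
\qquad Z_T+Z_{T^c}=(nB-|T|)+(nB-|T^c|)=(2n-1)B.
\]
The same averaging chain as in~\eqref{eq:bias} gives $\max\{|H_T|/Z_T,\,|H_{T^c}|/Z_{T^c}\}\ge 1/(2n-1)$. Hence the error is at least $1/(4n-2)$ on $P_T$ or on $P_{T^c}$.

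\textbf{Converse.} Let $\Ext$ depend only on $i$, via a labelling $h:V\to\{\pm1\}$ that takes the value $+1$ on exactly $B/2$ indices; this is possible since $B$ is even. Every key at index $i$ carries the label $h(i)$. So index $i$ contributes $(n-1)h(i)$ if $i\in T$ and $nh(i)$ otherwise, and
\[
H_T=n\sum_{i\in V}h(i)-\sum_{i\in T}h(i)=-\sum_{i\in T}h(i).
\]
Balance gives $\sum_{i\in T}h(i)=-\sum_{i\notin T}h(i)$. Writing $t=|T|$, this yields $|H_T|\le\min(t,\,B-t,\,B/2)$, so the error on $P_T$ is at most $\min(t,B-t,B/2)/\bigl(2(nB-t)\bigr)$.

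It remains to maximize this over $t\in[0,B]$.
\begin{itemize}
\item For $t\le B/2$ the bound is $t/(2(nB-t))$, which is increasing in $t$.
\item For $t\ge B/2$ the bound is $(B-t)/(2(nB-t))$, which is decreasing in $t$ because $n\ge1$.
\end{itemize}
Both pieces equal $(B/2)/\bigl(2(n-\tfrac12)B\bigr)=1/(4n-2)$ at $t=B/2$, so the maximum is $1/(4n-2)$. Equality holds for $T=\{i:h(i)=+1\}$: then $|H_T|=B/2$ and $Z_T=(n-\tfrac12)B$.

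\textbf{Main obstacle.} Neither part is hard. The only care needed is in the converse: the bound must be checked for every $t$, not just $t=B/2$, and the balance of $h$ (using that $B$ is even) is what makes the worst case fall exactly at $|T|=B/2$.
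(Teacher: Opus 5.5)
Your proof is correct and follows the paper's: the lower bound is the same parity argument with $v_i$ a signed sum of $2n-1$ values $\pm1$, and the converse uses the same balanced function of the index, with the same extremal source $T=\{i:h(i)=+1\}$ (the paper's $T=g^{-1}(0)$). The only difference is bookkeeping in the converse: the paper writes $\Pr[\Ext(K)=1]=A_1/(A_0+A_1)$ and notes that each $A_\beta$ lies in $[(n-1)B/2,\,nB/2]$, which avoids your piecewise maximization over $|T|$.
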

\begin{proof}
The first statement is proved exactly as \cref{lem:noext}, with the $n-1$
central keys in place of $C_i$ and the $n$ side keys in place of $L_i,R_i$.
Let $h(k)=(-1)^{\Ext(k)}$ and $H_T=\sum_{k\in\Omega_T}h(k)$, so that $\Ext$
has error $|H_T|/(2Z_T)$ on $P_T$. For each $i\in V$ the integer
\[
 v_i=\sum_{c\in[n-1]}h\bigl(C^{(c)}_i\bigr)-\sum_{j\in[n]}h\bigl(X^{(j)}_i\bigr)
\]
is a sum of $2n-1$ odd numbers and is therefore odd; in particular
$|v_i|\ge1$. With $T=\{i:v_i>0\}$ we get, as in the proof of
\cref{lem:noext},
\[
 H_T-H_{T^c}=\sum_{i\in V}|v_i|\ge B,\qquad Z_T+Z_{T^c}=(2n-1)B,
\]
and~\eqref{eq:bias} becomes
$\max\{|H_T|/Z_T,|H_{T^c}|/Z_{T^c}\}\ge1/(2n-1)$. So $\Ext$ has error at least
$1/(4n-2)$ on $P_T$ or on $P_{T^c}$.

For the second statement, let $g:V\to\{0,1\}$ be balanced and let
$\Ext(k)=g(i)$ for every key $k$ at index $i$. Fix $T$. Under $P_T$, index $i$
carries $w_i$ keys, where $w_i=n-1$ if $i\in T$ and $w_i=n$ otherwise. With
$A_\beta=\sum_{i:\,g(i)=\beta}w_i$ we have
$\Pr[\Ext(K)=1]=A_1/(A_0+A_1)$, and both $A_0$ and $A_1$ lie between
$(n-1)B/2$ and $nB/2$. Hence $\Pr[\Ext(K)=1]$ lies between $(n-1)/(2n-1)$ and
$n/(2n-1)$, so the error is at most $1/(4n-2)$, with equality for
$T=g^{-1}(0)$.
\end{proof}

\Cref{lem:correct-n,lem:private-n,lem:noext-n} prove \cref{thm:twon}: a set
of more than two parties reconstructs from any two of its members, and the
remark after \cref{def:ext} extends \cref{lem:noext-n} to $\ell$ output bits.

\section{\texorpdfstring{$r$}{r}-out-of-\texorpdfstring{$r$}{r} secret sharing}\label{sec:rr}
We now extend our result to $r$-out-of-$r$ secret sharing. Fix $r\ge2$ and
$b\ge r$.  Let $V=\F_{2^b}$, $B=2^b$, and $N=B^{r-1}$, and fix
$\gamma_1,\dots,\gamma_r\in V$ linearly independent over $\F_2$. Such $\gamma_1, \ldots, \gamma_r$ exist
because $b\ge r$. The index is now a vector $u=(u_1,\dots,u_{r-1})\in V^{r-1}$,
and the \emph{pad} of party $j$ is
\[
 p_j(u)=u_j\quad(j<r),\qquad p_r(u)=u_1+\dots+u_{r-1}.
\]
We use two properties of the pads:
\begin{itemize}
\item[(P1)] $p_1(u)+\dots+p_r(u)=0$ for every $u$;
\item[(P2)] for every set $J$ of $r-1$ parties, the map
$p_J(u)=(p_j(u))_{j\in J}$ is a bijection from $V^{r-1}$ to $V^J$.
\end{itemize}
A key is a pair $(u,S)$ with $u\in V^{r-1}$ and $\emptyset\ne S\subseteq[r]$; let $\K^{(r)}$ be the set of keys. The message set is $V$, and the scheme
$\Pi^{(r)}_b$ shares $m\in V$ as
\begin{equation}\label{eq:rr}
 \Shr_j\bigl(m,(u,S)\bigr)=
 \begin{cases}p_j(u)+\gamma_jm&\text{if }j\in S,\\
 \overline{p_j(u)}&\text{if }j\notin S.\end{cases}
\end{equation}
We say that the key $(u,S)$ \emph{activates} the parties in $S$. Thus an
active party receives its pad plus a multiple of the secret, and an inactive
party receives its pad, tagged, as a dummy. For a share $x$ or $\bar x$ we
call $x\in V$ its \emph{value}.

Let $\mathcal O_r$ be the family of subsets of $[r]$ of odd size and
$\mathcal E_r$ the family of nonempty subsets of even size, so that
$|\mathcal O_r|=2^{r-1}$ and $|\mathcal E_r|=2^{r-1}-1$. For every
$T\subseteq V^{r-1}$ define
\begin{equation}\label{eq:rr-sources}
\begin{gathered}
 \Omega_T=\bigl\{(u,S):u\in T,\ S\in\mathcal O_r\bigr\}\cup
 \bigl\{(u,S):u\notin T,\ S\in\mathcal E_r\bigr\},\\
 Z_T=|\Omega_T|=(2^{r-1}-1)N+|T|,\qquad P_T=U_{\Omega_T}.
\end{gathered}
\end{equation}
The source class is $\Sour^{(r)}_b=\{P_T:T\subseteq V^{r-1}\}$. At every index
a source contains either the $2^{r-1}$ keys with an odd set or the
$2^{r-1}-1$ keys with a nonempty even set, just as in
\cref{sec:construction} it contains either $C_i$ or $L_i,R_i$.

For $r=2$ we have $\mathcal O_2=\{\{1\},\{2\}\}$ and
$\mathcal E_2=\{\{1,2\}\}$, and both pads equal $u$. Take $\gamma_1=1$ and
$\gamma_2=\alpha^{-1}$, which are linearly independent over $\F_2$ because
$\alpha\notin\{0,1\}$. After multiplying the untagged shares of party~2 by
$\alpha$, the keys $(u,\{1\})$, $(u,\{2\})$, and $(u,\{1,2\})$ become $L_u$,
$R_u$, and $C_u$ of~\eqref{eq:sharing}, and $P_T$ becomes the source
$P_{V\setminus T}$ of \cref{sec:construction}. For $r=3$ the seven keys at an
index are the three singletons and $\{1,2,3\}$, used at indices in $T$, and
the three pairs, used at indices outside $T$.

\begin{theorem}[$r$-out-of-$r$]\label{thm:rr}
Let $r\ge2$ and $b\ge r$. The scheme $\Pi^{(r)}_b$ is a perfectly correct
$r$-out-of-$r$ scheme for $b$-bit messages that is perfectly private for
$\Sour^{(r)}_b$, and:
\begin{enumerate}
\item each share has $b+1$ bits, and the key set $\K^{(r)}$ has
$(2^r-1)2^{(r-1)b}$ elements;
 
\item every distribution in $\Sour^{(r)}_b$ is uniform on a set of between
$(2^{r-1}-1)N$ and $2^{r-1}N$ keys, and hence has min-entropy at least
$(r-1)b+\log(2^{r-1}-1)$;
\item for every $\ell\ge1$ and every function
$\Ext:\K^{(r)}\to\{0,1\}^\ell$,
\[
 \max_{P\in\Sour^{(r)}_b}\TV(\Ext(K),U_\ell)\ \ge\ \frac1{2(2^r-1)},
 \qquad K\sim P;
\]
\item for $\ell=1$, some function of the index $u$ alone attains equality in
item~3.
\end{enumerate}
\end{theorem}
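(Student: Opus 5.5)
The plan follows \cref{sec:construction,sec:twon}: prove correctness, privacy, the counting items, the parity lower bound and its tightness in turn. Items~1 and~2 are direct counts. Each share is a tag bit plus an element of $V$. There are $N$ indices and $2^r-1$ nonempty sets $S$, so $|\K^{(r)}|=(2^r-1)N$. By~\eqref{eq:rr-sources}, $Z_T$ lies between $(2^{r-1}-1)N$ and $2^{r-1}N$, which gives the min-entropy bound.

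\emph{Correctness.} From the tags of the $r$ shares the reconstructor reads off $S$. It then sums the $r$ values, untagged or not. By (P1) the pads cancel, leaving $\bigl(\sum_{j\in S}\gamma_j\bigr)m$. Since $S\neq\emptyset$ and the $\gamma_j$ are linearly independent over $\F_2$, the coefficient $\sum_{j\in S}\gamma_j$ is nonzero, so one division recovers $m$.

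\emph{Privacy.} This is the step needing the most care, and I would do it by the same per-index counting as \cref{lem:private-n}. It suffices to treat sets $J$ of exactly $r-1$ parties, since smaller sets are marginals. Let $j_0$ be the missing party. The shares of $J$ under key $(u,S)$ depend only on $u$ and $A=S\cap J$. For fixed $A$, the map sending $u$ to the tuple of shares is a bijection onto the tuples with tag pattern $A$. Indeed, by (P2) $u\mapsto p_J(u)$ is a bijection, and adding $\gamma_jm$ in the coordinates $j\in A$ is a translation of $V^J$.

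Now count, for fixed $u$ and $A$, the keys in $\Omega_T$ with $S\cap J=A$. The candidates are $S=A$ and $S=A\cup\{j_0\}$, which have opposite parities.
\begin{itemize}
\item If $A\neq\emptyset$, both candidates are nonempty. Exactly one is odd and one is even, so exactly one lies in $\Omega_T$, whether or not $u\in T$. Hence every tuple with pattern $A$ has probability exactly $1/Z_T$, independently of $m$.
\item If $A=\emptyset$, only $S=\{j_0\}$ is allowed, and it is odd. It is present iff $u\in T$. But then all shares are tagged pads $\overline{p_j(u)}$, which do not involve $m$.
\end{itemize}
So the distribution of $\Shr_J(m,K)$ is independent of $m$.

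\emph{Non-extractability.} I would copy \cref{lem:noext}. Set $h=(-1)^{\Ext}$ and, for each index $u$,
\[
 v_u=\sum_{S\in\mathcal O_r}h(u,S)-\sum_{S\in\mathcal E_r}h(u,S).
\]
This is a sum of $2^r-1$ odd numbers, hence odd, so $|v_u|\ge1$. Take $T=\{u:v_u>0\}$. Then $H_T-H_{T^c}=\sum_u|v_u|\ge N$, while $Z_T+Z_{T^c}=(2^r-1)N$. The averaging inequality~\eqref{eq:bias} gives bias at least $1/(2^r-1)$ on $P_T$ or on $P_{T^c}$, i.e.\ error at least $1/(2(2^r-1))$. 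The projection remark after \cref{def:ext} extends this to every $\ell$.

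\emph{Tightness.} As in \cref{lem:noext-n}, take a balanced $g:V^{r-1}\to\{0,1\}$ and set $\Ext(u,S)=g(u)$. Under $P_T$, index $u$ carries weight $w_u\in\{2^{r-1}-1,\,2^{r-1}\}$. Hence $\Pr[\Ext(K)=1]$ lies in
\[
 \left[\frac{2^{r-1}-1}{2^r-1},\ \frac{2^{r-1}}{2^r-1}\right],
\]
so the error is at most $1/(2(2^r-1))$. Equality holds when $T=g^{-1}(1)$, i.e.\ when all indices with $g(u)=1$ carry the larger weight.
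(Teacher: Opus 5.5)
Your proposal is correct and follows the paper's proof essentially step for step. Correctness comes from (P1) and the linear independence of the $\gamma_j$, and privacy from counting, for each tag pattern $A=S\cap J$, the two candidates $A$ and $A\cup\{j_0\}$ of opposite parity. Non-extractability uses the odd-sum parity argument with $T=\{u:v_u>0\}$ and $Z_T+Z_{T^c}=(2^r-1)N$, and tightness uses a balanced function of $u$ with $T=g^{-1}(1)$.
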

For $r=2$ the bound is $1/6$, as in \cref{thm:main}, and for $r=3$ it is
$1/14$.

\begin{lemma}[Correctness]\label{lem:correct-rr}
For every $m\in V$ and every key $(u,S)$, the values $z_1,\dots,z_r$ of
the $r$ shares satisfy
\[
 z_1+\dots+z_r=\Bigl(\sum_{j\in S}\gamma_j\Bigr)m,
\]
and the coefficient of $m$ is nonzero. Hence the $r$ shares determine $m$.
\end{lemma}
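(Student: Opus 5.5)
We compute the sum of the share values directly from~\eqref{eq:rr} and then use property (P1) together with the linear independence of the $\gamma_j$.

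Fix $m\in V$ and a key $(u,S)$, and let $z_j$ be the value of party $j$'s share. By~\eqref{eq:rr}, $z_j=p_j(u)+\gamma_jm$ if $j\in S$, and $z_j=p_j(u)$ if $j\notin S$ (the tag does not change the value). Summing over $j\in[r]$ and using (P1) to cancel the pads, we get
\[
 z_1+\dots+z_r=\sum_{j\in[r]}p_j(u)+\Bigl(\sum_{j\in S}\gamma_j\Bigr)m=\Bigl(\sum_{j\in S}\gamma_j\Bigr)m .
\]
Since $S$ is nonempty and $\gamma_1,\dots,\gamma_r$ are linearly independent over $\F_2$, the sum $\sum_{j\in S}\gamma_j$ is a nontrivial $\F_2$-combination of the $\gamma_j$, so it is nonzero.

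For reconstruction, the tags of the $r$ shares reveal exactly which parties are active, and hence reveal $S$, since a share is untagged if and only if its party lies in $S$. The reconstruction map $\Rec_{[r]}$ therefore reads $S$ from the tag pattern, adds the $r$ values, and multiplies by $\bigl(\sum_{j\in S}\gamma_j\bigr)^{-1}$ in $\F_{2^b}$, which outputs $m$.

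The only point needing care is the nonvanishing of the coefficient. This is where the hypothesis $b\ge r$ enters, since it is what guarantees that $r$ linearly independent $\gamma_j$ exist; everything else is direct substitution.
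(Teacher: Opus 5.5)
Your proposal is correct and follows the paper's proof essentially step for step. You substitute \eqref{eq:rr}, cancel the pads with (P1), get the nonzero coefficient from $S\neq\emptyset$ and the $\F_2$-linear independence of the $\gamma_j$, and read $S$ off the tags to divide by that coefficient.
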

\begin{proof}
By~\eqref{eq:rr}, $z_j=p_j(u)+\gamma_jm$ for $j\in S$ and $z_j=p_j(u)$ for
$j\notin S$, so the identity follows from (P1). The coefficient
$\sum_{j\in S}\gamma_j$ is nonzero because $S$ is nonempty and the $\gamma_j$
are linearly independent over $\F_2$. The tags reveal $S$, so $\Rec$ computes
this coefficient and divides by it.
\end{proof}

\begin{lemma}[Privacy]\label{lem:private-rr}
Let $J$ be a set of $r-1$ parties. Record the shares of $J$ as the pair
$(R,z)$, where $R=S\cap J$ is the tag pattern and $z\in V^J$ is the vector of
values. For every $T\subseteq V^{r-1}$ and every $m\in V$, if $K\sim P_T$
then $(R,z)$ is uniform on the set
\begin{equation}\label{eq:rr-view}
 \bigl\{(R,z):\emptyset\ne R\subseteq J,\ z\in V^J\bigr\}\ \cup\
 \bigl\{(\emptyset,z):z\in p_J(T)\bigr\},
\end{equation}
which has size $Z_T$ and does not depend on $m$. In particular
$\Pi^{(r)}_b$ is perfectly private for $\Sour^{(r)}_b$.
\end{lemma}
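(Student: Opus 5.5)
Fix $T$, $m$, and a set $J=[r]\setminus\{j_0\}$ of $r-1$ parties. The plan is to count, for each pair $(R,z)$, the keys $(u,S)\in\Omega_T$ producing it, and show that the count is $1$ on the set~\eqref{eq:rr-view} and $0$ elsewhere. Uniformity then follows because $P_T$ is uniform on $\Omega_T$.

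\emph{Recovering $u$ and $S$.} A key $(u,S)$ produces tag pattern $R=S\cap J$. For $j\in R$ it produces the value $z_j=p_j(u)+\gamma_jm$, and for $j\in J\setminus R$ it produces $z_j=p_j(u)$. So, given $(R,z)$ and $m$, the vector $p_J(u)$ is determined: it equals $z_j+\gamma_jm$ on $R$ and $z_j$ off $R$. By (P2), $u$ is then uniquely determined, and $u$ ranges over all of $V^{r-1}$ as $z$ ranges over $V^J$, for fixed $R$ and $m$. Given $R$, the set $S$ is one of exactly two options, $R$ or $R\cup\{j_0\}$, and these have opposite parities.

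\emph{Case analysis on $R$.}
\begin{itemize}
\item If $R\neq\emptyset$, both options for $S$ are nonempty and have opposite parities. Whether $u\in T$ or $u\notin T$, exactly one of them lies in the allowed family ($\mathcal O_r$ or $\mathcal E_r$ respectively). So each such $(R,z)$ has exactly one preimage in $\Omega_T$.
\item If $R=\emptyset$, the options are $S=\emptyset$, which is never a key, and $S=\{j_0\}$, which is odd and hence allowed exactly when $u\in T$. Here $z=p_J(u)$ does not involve $m$, so $(\emptyset,z)$ has one preimage precisely when $z\in p_J(T)$, and none otherwise.
\end{itemize}
This establishes that $(R,z)$ is uniform on~\eqref{eq:rr-view}.

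\emph{Size of the set.} The set~\eqref{eq:rr-view} has size $(2^{r-1}-1)B^{r-1}+|T|$, using $|p_J(T)|=|T|$ by (P2). This equals $Z_T$, as it must. The set is independent of $m$, so perfect privacy follows. Subsets of parties smaller than $r-1$ are handled by marginalizing the joint view of a superset of size $r-1$.

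\emph{Main obstacle.} The step needing care is the case $R=\emptyset$, since it is the only place where $T$ enters. One must check that the $m$-dependence disappears there, which it does because no party in $J$ is active, and that $S=\emptyset$ is excluded from the key set.
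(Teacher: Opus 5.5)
Your proposal is correct and is essentially the paper's proof. Both arguments rest on the same two facts. First, for a nonempty pattern $R$ the two candidate sets $R$ and $R\cup\{j_0\}$ have opposite parities, so exactly one of them is allowed at each index; second, for $R=\emptyset$ only $S=\{j_0\}$ remains, which is used exactly at $u\in T$. Both then apply (P2) to get a bijection between indices and values. The only difference is direction: you invert $(R,z)\mapsto(u,S)$, whereas the paper lets $u$ range over $V^{r-1}$ and observes that the values cover $V^J$ exactly once. This is the same bijection, and your size check $(2^{r-1}-1)N+|T|=Z_T$ matches.
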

\begin{proof}
Fix $T$ and $m$, and let $j_0$ be the party outside $J$. Fix a nonempty
$R\subseteq J$. A key $(u,S)$ shows $J$ the tag pattern $R$, that is,
$S\cap J=R$, exactly when $S=R$ or $S=R\cup\{j_0\}$.
  These two sets are nonempty and have
different parities, so one of them lies in $\mathcal O_r$ and the other in
$\mathcal E_r$. Hence $\Omega_T$ contains exactly one key at index $u$ that
shows $J$ the pattern $R$, whether or not $u\in T$. That key gives $J$ the
values $p_J(u)+(\ind[j\in R]\,\gamma_jm)_{j\in J}$, which by (P2) run over
$V^J$ exactly once as $u$ runs over $V^{r-1}$. For the empty pattern, the key
must activate $\{j_0\}$, which is odd; so $\Omega_T$ contains such a key at
index $u$ exactly when $u\in T$, and it gives $J$ the values $p_J(u)$,
which do not involve $m$. Altogether, every element of~\eqref{eq:rr-view} is
produced by exactly one key of $\Omega_T$. A coalition of fewer than $r-1$
parties sees a projection of the shares of a coalition of $r-1$ parties, and
so its view does not depend on $m$ either.
\end{proof}

\begin{lemma}[No bit can be extracted]\label{lem:noext-rr}
For every $\Ext:\K^{(r)}\to\{0,1\}$ there is $T\subseteq V^{r-1}$ such that $\Ext$ has statistical error at least
$1/(2(2^r-1))$ on $P_T$ or on $P_{T^c}$. Conversely, some $\Ext$ that depends
only on the index $u$ has error at most $1/(2(2^r-1))$ on every source in
$\Sour^{(r)}_b$, with equality for some source.
\end{lemma}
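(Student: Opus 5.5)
The plan is to repeat the parity argument of \cref{lem:noext} and \cref{lem:noext-n} index by index. Set $h(k)=(-1)^{\Ext(k)}$ and $H_T=\sum_{k\in\Omega_T}h(k)$. Then $\Ext$ has error $|H_T|/(2Z_T)$ on $P_T$. For each index $u\in V^{r-1}$ define
\[
 v_u=\sum_{S\in\mathcal O_r}h\bigl((u,S)\bigr)-\sum_{S\in\mathcal E_r}h\bigl((u,S)\bigr).
\]
This is a sum of $|\mathcal O_r|+|\mathcal E_r|=2^r-1$ odd numbers, so it is odd and $|v_u|\ge1$. Take $T=\{u:v_u>0\}$, so that $T^c=\{u:v_u<0\}$. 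The source $\Omega_T$ uses the odd keys at indices in $T$, and $\Omega_{T^c}$ uses them at indices in $T^c$. Hence
\[
 H_T-H_{T^c}=\sum_{u}|v_u|\ge N,\qquad Z_T+Z_{T^c}=(2^{r-1}-1)N\cdot 2+N=(2^r-1)N,
\]
where the second identity uses $Z_T=(2^{r-1}-1)N+|T|$ and $|T|+|T^c|=N$. The mediant inequality from~\eqref{eq:bias} then gives $\max\{|H_T|/Z_T,|H_{T^c}|/Z_{T^c}\}\ge1/(2^r-1)$. So the error is at least $1/(2(2^r-1))$ on $P_T$ or on $P_{T^c}$.

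For the converse, fix a balanced $g:V^{r-1}\to\{0,1\}$; it exists since $N$ is even. Let $\Ext(u,S)=g(u)$. Under $P_T$, index $u$ carries weight $w_u=2^{r-1}$ if $u\in T$ and $w_u=2^{r-1}-1$ otherwise. Let $A_\beta=\sum_{u:g(u)=\beta}w_u$. Each $A_\beta$ lies in $[(2^{r-1}-1)N/2,\,2^{r-1}N/2]$, because each class has $N/2$ indices. Therefore
\[
 \Pr[\Ext(K)=1]=\frac{A_1}{A_0+A_1}\in\Bigl[\frac{2^{r-1}-1}{2^r-1},\frac{2^{r-1}}{2^r-1}\Bigr],
\]
since $A_1/(A_0+A_1)$ is increasing in $A_1$ and decreasing in $A_0$. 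This interval gives error at most $\tfrac12-\tfrac{2^{r-1}-1}{2^r-1}=\tfrac1{2(2^r-1)}$. Equality holds for $T=g^{-1}(1)$ (or $g^{-1}(0)$): then $A_1=2^{r-1}N/2$ and $A_0=(2^{r-1}-1)N/2$.

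The only point needing care is the bookkeeping of which family, $\mathcal O_r$ or $\mathcal E_r$, is used at $u\in T$. The sign convention for $v_u$ must match it so that the difference is $\sum_u|v_u|$. The rest is the same as in \cref{lem:noext-n}.
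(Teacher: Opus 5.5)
Your proposal is correct and follows the paper's proof: the same signed sums $v_u$ comparing the $2^{r-1}$ odd-set keys with the $2^{r-1}-1$ even-set keys, the choice $T=\{u:v_u>0\}$ with the mediant bound using $Z_T+Z_{T^c}=(2^r-1)N$, and a balanced function of $u$ for the converse. You also fill in details that the paper only sketches by reference to \cref{lem:noext-n}, namely the explicit identity $H_T-H_{T^c}=\sum_u|v_u|$, why a balanced $g$ exists, and the equality case $T=g^{-1}(1)$.
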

\begin{proof}
This is the proof of \cref{lem:noext-n}, with the $2^{r-1}$ keys $(u,S)$,
$S\in\mathcal O_r$, in place of the central keys at an index and the
$2^{r-1}-1$ keys $(u,S)$, $S\in\mathcal E_r$, in place of the side keys. Now
\[
 v_u=\sum_{S\in\mathcal O_r}h(u,S)-\sum_{S\in\mathcal E_r}h(u,S)
\]
is a sum of $2^r-1$ odd numbers, hence odd, and $Z_T+Z_{T^c}=(2^r-1)N$, which
gives the first statement. For the second, under every $P_T$ each index
carries $2^{r-1}-1$ or $2^{r-1}$ keys, and the computation in the proof of
\cref{lem:noext-n}, with $n-1$ and $n$ replaced by these two numbers, shows
that a balanced function $g$ of $u$ has error at most $1/(2(2^r-1))$, with
equality for $T=g^{-1}(1)$.
\end{proof}

Combining \Cref{lem:correct-rr,lem:private-rr,lem:noext-rr}, we prove \cref{thm:rr}; the
remark after \cref{def:ext} extends \cref{lem:noext-rr} to $\ell$ output
bits.

\paragraph{Relation to the reductions of~\cite{ACOR22}.}
A $(t,n,t',n')$-distribution design hands each of $n$ new parties the shares
of a set of old parties of a $(t',n')$-scheme, so that any $t$ new parties
together hold the shares of at least $t'$ old parties, and any $t-1$ new
parties hold the shares of fewer than $t'$. It converts every
$(t',n')$-scheme into a $(t,n)$-scheme with the same keys, and hence private
for the same source class. For $t=n=2$ such a design exists whenever
$2\le t'\le n'$: give the first new party the shares of old parties
$1,\dots,t'-1$ and the second the share of old party $t'$. Consequently a
separation for any $(t',n')$ implies one for $(2,2)$. Designs do not go in
the other direction, so neither \cref{thm:twon} for $n\ge3$ nor
\cref{thm:rr} for $r\ge3$ can be derived from \cref{thm:main} in this way. A
$(2,3,2,2)$-design would give three new parties subsets of $\{1,2\}$ of size
at most one whose pairwise unions all equal $\{1,2\}$, and no such subsets
exist; and in an $(r,r,2,2)$-design with $r\ge3$, some $r-1$ of the $r$
subsets would already cover $\{1,2\}$.

\section*{Acknowledgments}

This section is primarily an acknowledgment of defeat. Divesh Aggarwal (the person whose name appears as an author on this paper, henceforth DA) pursued
this problem for more than ten years as a curiosity question, and discussed
it with many people, including Yevgeniy Dodis, Maciej Obremski, Mingyuan
Wang, Eldon Chung, Tomasz Kazana, Arka Rai Choudhuri, and Siyao Guo.

Throughout this period the conjecture was that every source that supports
two-out-of-two secret sharing is extractable, with the lingering thought that
an extractor, if it existed, could not be very complicated, so that the
failure to find one hinted at a counterexample. All attempts to prove the
conjecture or to find a counterexample failed.

In~\cite{ACOR22}, DA and coauthors considered a relaxation in which
the scheme is additionally leakage-resilient or non-malleable, and a
straightforward argument showed that every source supporting such a scheme is
extractable.

Between June and August 2026, DA made several attempts to find an
extractor with ChatGPT~5.5~Pro and ChatGPT~5.6~Pro/Ultra. In most of these
attempts the model was fairly convinced that a deterministic extractor
exists, and it proposed constructions that worked in every case except a
special one.

In September 2026, the same question was posed to ChatGPT~6.0, which, after
several attempts at a proof, stated that the conjecture is false and found a
simple counterexample. Subsequent prompts generalized and cleaned up the
counterexample, with Claude Fable~5.1 joining the effort. The paper was
written with the assistance of Claude Fable~5.1; DA has verified
everything to the best of his ability and takes full responsibility for the text.

\bibliographystyle{alpha}
\bibliography{secret_sharing}

\appendix
\section{Comparison with a statement of Yao and Li}\label{app:comparison}
This appendix concerns only the implication for ordinary secret sharing
in~\cite{YL25}; statement numbers and hypotheses follow the published
journal version, against which this appendix has been checked. We record the
comparison because \cref{thm:main} is incompatible with that implication,
and we indicate the step at which the argument departs from the encryption
case.

Their Definition~4 requires privacy for unauthorized subsets. Definition~5
sets $\psi(m,k)=\Shr(m,k)$, the entire tuple of shares, and calls the scheme
perfectly binding when $\psi(m,k)$ is injective in $m$ for each fixed $k$.
Perfect reconstruction implies this condition. They define
$D_k=\psi(U_b,k)$ and $\mathcal Z=\{D_k:k\in\{0,1\}^d\}$, denoted $\mathcal D$ there. Proposition~2
states that an $\ell$-bit extractor $f$ for $\mathcal Z$ with error
$\varepsilon$ yields an extractor for the original class of key sources with
error $\delta+\varepsilon$, namely
\begin{equation}\label{eq:invalid-reduction}
 \widetilde{\Ext}(k)=f(\psi(0^b,k)).
\end{equation}
The proof substitutes $\psi$ for encryption in~\cite[Lemma~1]{BD07}. That
lemma needs the full output on a fixed message to be close to the full
output on a uniform message. For sharing, perfect reconstruction gives
instead
\begin{equation}\label{eq:joint-distance}
 \TV\bigl(\Shr(0^b,K),\Shr(U_b,K)\bigr)=1-2^{-b}
\end{equation}
for every key distribution, because the distributions of the full tuple under
different messages have disjoint supports. Concretely, for
$1\le\ell\le b$ let $f$ reconstruct the message from the tuple and output its
first $\ell$ bits, with arbitrary values outside the range of $\psi$. Then
$f(D_k)=U_\ell$ for every $k$, while~\eqref{eq:invalid-reduction} is the
constant $0^\ell$. The error is in the statement, not only in the proof:
the scheme of \cref{thm:main} is, in their terminology,
$(\Sour_b,\delta)$-secure for every $\delta>0$, and with this $f$ the family
$\mathcal Z$ is $(1,0)$-extractable, so Proposition~2 would make $\Sour_b$
$(1,\delta)$-extractable for every $\delta>0$, which \cref{thm:main}
excludes for $\delta<1/6$. Theorems~5 and~8 of~\cite{YL25} combine
Proposition~2 with extraction statements for $\mathcal Z$ and fail for
threshold sharing in the same way. 

\end{document}